\newcommand{\einsfun}{\mathbf 1}
\newcommand{\D}{\mathrm{d}}
\renewenvironment{abstract}
  {{\bfseries\noindent{\abstractname}\par\nobreak}\footnotesize}
  {\bigskip}
\titlespacing{\section}{0pt}{*3}{*1}
\titlespacing{\subsection}{0pt}{*2}{*0.5}
\titlespacing{\subsubsection}{0pt}{*1.5}{0pt}
\theoremstyle{plain}
\newtheorem{theorem}{Theorem}
\theoremstyle{remark}
\newtheorem{example}[theorem]{Example}
\begin{document}

\title{Asymmetric dependence in hydrological extremes}

\author[1]{Cristina Deidda\thanks{corresponding author (cristina.deidda@vub.be)}}%
\author[2]{Sebastian Engelke}%
\author[3]{Carlo De Michele}
\affil[1]{Department of Hydrology and Hydraulic Engineering, Vrije Universiteit Brussel, Brussels, Belgium}%
\affil[2]{Research Center for Statistics, University of Geneva, Geneva, Switzerland}%
\affil[3]{Department of Civil and Environmental Engineering, Politecnico di Milano, Milano, Italy}%

\vspace{-1em}
\date{}

\begingroup
\let\center\flushleft
\let\endcenter\endflushleft
\maketitle
\endgroup

\begin{abstract}
Extremal dependence describes the strength of correlation between the largest observations of two variables.
It is usually measured with symmetric dependence coefficients that do not depend on the order of the variables. In many cases, there is a natural asymmetry between extreme observations that can not be captured by such coefficients.
An example for such asymmetry are large discharges at an upstream and a downstream stations on a river network: an extreme discharge at the upstream station will directly influence the discharge at the downstream station, but not vice versa.
Simple measures for asymmetric dependence in extreme events have not yet been investigated. 
We propose the asymmetric tail Kendall's $\tau$ as a measure for extremal dependence that is sensitive to asymmetric behaviour in the largest observations. It essentially computes the classical Kendall's $\tau$ but conditioned on the extreme observations of one of the two variables. We show theoretical properties of this new coefficient and derive a formula to compute it for existing copula models. We further study its effectiveness and connections to causality in simulation experiments. 
We apply our methodology to a case study on river networks in the United Kingdom to illustrate the importance of measuring asymmetric extremal dependence in hydrology. Our results show that there is important structural information in the asymmetry that would have been missed by a symmetric measure. Our methodology is an easy but effective tool that can be applied in exploratory analysis for understanding the connections among variables and to detect possible asymmetric dependencies.

\end{abstract}

\section*{Plain Language Summary}
Compound events describe situations where the simultaneous behaviour of two or more variables lead to severe impacts. For instance, the dependence between climate or hydrological variables can lead to particular conditions that result in extreme events at the same time in different locations. Since the physical processes behind these phenomena are very complex, there can be a stronger influence from one variable on another than the other way around. In such cases, there is asymmetry in the dependence between the extreme observations of the two variables. The traditional measures of dependence are symmetric and can not detect any asymmetries. We propose a new measure that is sensitive to asymmetric behaviour in extremes. It is based on an extension of the  Kendall's $\tau$ coefficient, a classical dependence measure. We derive evidence from theory and simulation experiments for the effectiveness of our new methodology.  
We then apply it to a case study on river networks in the United Kingdom where we show that our measure detects asymmetric behaviour of extreme discharges with a preferred direction from upstream to downstream stations. Our work points out the importance of considering proper tools for analyzing the connections between different variables in particular in the presence of asymmetry in extreme observations.

\section{Introduction}

In the last decades, increasing attention has been put on the study of complex extreme events generated by the inter-correlation of multiple variables.
Extreme events at regional scale, such as floods, droughts, or heatwaves, can be caused by the interactions of climate/hydrological variables, or can be influenced and amplified by large scale climate patterns such as the El Niño–Southern Oscillation \citep{Ward2014}.
Compound climate or weather events are complex events generated by multiple variables (or hazards) that interact across multiple spatial and temporal scales \citep{Zscheischler20}. 
An example of flooding due to a compound event is the occurrence of heavy precipitation on saturated soil \citep{KIM2019100629,Fang2016} or on frozen ground \citep{Zhang17}. Similarly, the co-occurrence of heavy precipitation and storm surge increases the coastal flooding probability \citep{Bevacqua2017,Ward2018}. Temporal and spatial clustering of precipitation is shown to increase the flood risk in several case studies around the world \citep{Boers2014natcom,Schneeberger2018,Barton2016,Villarini13,Bevacquaetal2021,Banfi22}. The complexity of studying these problems comes from the fact that the involved variables are often dependent or even exhibit a causal connection. 

The assessment and quantification of the strength of dependence between two variables is often done through summary coefficients. One of the most commonly used summaries is the Pearson correlation $\rho_P$ that measures linear dependence. Other measures of association like the Kendall's $\tau$, the Spearman's $\rho_S$, have been proposed and show slightly different properties, such the invariance under marginal transformations. Also in climate science, the use of such measures is increasingly popular in the analysis of climate or weather hazards (\cite{DepZsch17cop,Runge2019}; \cite{Runge2019b}) and possible connections among the main drivers.

In the study of multivariate extreme events, dependence measures are widely applied in the exploratory analysis and model assessment. In order to focus on the correlation between the most extreme observations of the two components of a bivariate random vector $(X,Y)$, numerous coefficients have been introduced in the extreme value literature. The most commonly used measure, both in theory and applications, is the extremal coefficient \citep{col1999}.
It can be defined as the correlation of large exceedances, that is,
\begin{align}\label{chi}
\chi_{XY} =  \lim_{q\to 1} \text{Cor}\left(\einsfun\{X > F_X^{-1}(q)\}, \einsfun\{Y > F_Y^{-1}(q)\}\right), 
\end{align}
where the marginal distribution functions $F_X$ and $F_Y$ of $X$ and $Y$, respectively, are assumed to be continuous. The indicator function $\einsfun\{X > F_X^{-1}(q)\}$ is one if $X$ exceeds its $q$-quantile $F_X^{-1}(q)$ and zero otherwise, and similarly for $Y$. The coefficient $\chi_{XY}$ therefore measures the correlation between the occurrence of extreme events in $X$ and $Y$, and has for instance been used to study spatial extremes \citep{Davison12,PADOAN20131}. There are many other measures of extremal dependence with similar properties \citep[e.g.,][]{coo2006,lar2012}; see \cite{eng2021} for a review.

Most measures for extremal dependence are symmetric and do not depend on the order of the variables; for instance, the extremal correlation satisfies $\chi_{XY} = \chi_{YX}$.
This means that such measures may capture the overall strength of extremal dependence, but they do not reveal information on the possible \emph{asymmetry} or \emph{direction} of this dependence.
Asymmetries in extremal dependence can arise for many reasons, and several statistical models have been proposed \citep[e.g.,][]{PADOAN2011977}. In hydrology, for instance, two stations on a river network can be affected more or less strongly by certain precipitation events.  
On the other hand, a preferred direction of dependence between the two variables $X$ and $Y$ may also lead to an asymmetric behavior. Such directionality should be understood in the context of a causal relation. In the hydrological example, a downstream station will be causally affected by another station upstream on the river and the dependence between large events will thus typically exhibit asymmetry.

From a theoretical point of view, asymmetric dependence measures have extensively been studied in the context of statistical modeling of multivariate data \citep{Wolfgang21}. 
  In practical applications they are used in exploratory data analysis to guide subsequent modeling choices, for instance in finance and economics \citep{Patton04,Tsafack7,okimoto2008,Hatherley07}.
 In particular, to describe the whole dependence structure, a family of copula models is typically selected. The type of model can be symmetric or asymmetric, depending on the prior information from the dependence coefficients \citep{Durante10,Liebscher08,Kim14}.
 Asymmetric dependence measures can also be used to detect causal directions between the variables \citep{Hatemi12}.
 Beyond economical and financial applications, the importance of asymmetric dependence has been acknowledged in geotechnical analyses \citep{ZHANG20191960} and post processing of weather forecasts \citep{Li2020}.
 In many applications, the focus of the statistical analysis is on the dependence between the most extreme observations. In climate science, so-called compound events may lead to the most severe impacts \citep{Zscheischler18,Ridder20}. Similarly, in hydrology, the risk of the concomitant flood events on a river network attracts increasing attention \citep{asadi2015extremes, DeLuca2017, Kemter20, deidda2021causes}.
 Only few asymmetric dependence measures that are adapted to extremes have been proposed. Most of them are designed to detect causal directions under certain model classes \citep{Mhalla2019, gnecco2021causal, tra2021}. These measures are however not sensitive to more subtle asymmetries in the correlation structure of extremes.

In this work, we explore the asymmetry in the dependence of extremes, both from a theoretical and a practical perspective.
We introduce a new \emph{conditional} version of Kendall's $\tau$ coefficient for extreme observations. This results in an asymmetric extremal dependence coefficient that we call the \emph{asymmetric tail Kendall's $\tau$}. We investigate its theoretical properties and show that it is a natural object to study existing extreme value models.
In particular, we derive a closed form representation of the coefficient for certain models and study its relation to causal structures.
To assess the strength and possible limitations of our approach, we conduct extensive numerical experiments. 
We illustrate the developed methodology on the analysis of extreme floods in eighteen catchments located in the United Kingdom.

\section{Methodology: asymmetric tail Kendall's $\tau$}\label{sec:methods}

\subsection{Definition}\label{sec:def}
Let $(X,Y)$ be a bivariate random vector with continuous marginal distribution functions $F_X$ and $F_Y$, respectively. The Kendall's $\tau$, or Kendall's rank correlation coefficient, is defined as 
\begin{equation}\label{OriginalKT}
 \tau(X,Y) = 2\mathbb P( (X- \tilde X)(Y - \tilde Y) > 0) -1
\end{equation}
where $(\tilde X, \tilde Y)$ is an independent vector identically distributed as $(X,Y)$. The Kendall's $\tau$ takes values in the interval $[-1,+1]$ and measures the association between both variables. The original Kendall's $\tau$ has two drawbacks when applied to data on climatological or hydrological extreme events. Firstly, it is symmetric in the two variables, that is, $\tau(X,Y) = \tau(Y,X)$, and it therefore does not allow the identification of possible asymmetries in the dependence or the causal structure. In hydrological studies involving the river discharge or the level of pollutants at different gauging stations, such asymmetry does often arise. Secondly, Kendall's $\tau$ focuses on the bulk of the distributions of the variables and therefore does not measure accurately the strength of dependence between extreme observations of the two variables. 

We next introduce a new pair of coefficients that can capture possible asymmetries and put the focus on extremes. 
Let $q \in (0,1)$ be a probability level that is typically chosen to be close to one. We define the \emph{(pre-limit) asymmetric  tail Kendall's $\tau$ in direction of $X$ to $Y$} $(X\to Y)$ as
\begin{equation}\label{TailKT}
 \tau_{XY}(q) = 2\mathbb P( (X- \tilde X)(Y - \tilde Y) > 0 \mid X, \tilde X > F_X^{-1}(q)) -1
\end{equation}
where $q\in[0,1]$ is a probability level. Note that this coefficient is asymmetric because of the conditioning that only concerns the first variable. We further note that we only retain observations in the tail of the first variable, that is, the part that exceeds the $q$-quantile $F_X^{-1}(q)$. In extreme value theory, it is common to consider the limit as $q \to 1$ to describe the tail behaviour of the largest observations. We therefore define the limiting \emph{asymmetric tail Kendall's $\tau$ in direction of $X$ to $Y$} as
\begin{equation}\label{lim_kt}
 \tau_{XY} = \lim_{q\to 1} \tau_{XY}(q).
\end{equation}
By exchanging the roles of $X$ and $Y$, we define analogously the asymmetric tail Kendall's $\tau_{YX}(q)$ in direction of $Y$ to $X$ at level $q$ and its limit version $\tau_{YX}$.

Suppose now that we have independent observations $(X_1,Y_1), \dots, (X_n,Y_n) $ of the random vector $(X,Y)$, and let $X_{(1)} \leq X_{(2)} \leq \dots \leq X_{(n)}$ denote the order statistics of the sample $X_i$. Based on the definition in~\eqref{TailKT}, we can define an empirical estimator of the tail Kendall's $\tau$ at level $q =1 - k/n$ as
\[ \widehat \tau_{XY}(q) = \frac{1}{ {k \choose 2}}  \sum_{1\leq i \leq j \leq n} \text{sgn}(X_i - X_j) \text{sgn}(Y_i - Y_j) \einsfun\{X_i, X_j >  X_{(n-k)} \}, \]
where $k \leq n$ is the number of exceedances that are used in the tail, and $\text{sgn}(x)\in \{-1, +1\}$ is the sign of a real number $x \in \mathbb R$. The indicator $\einsfun\{X_i, X_j >  X_{(n-k)} \}$ makes sure that only the $k$ largest observations of $X$ are used. The estimator $\widehat \tau_{YX}(q)$ is defined similarly.

Even when we are interested in the limiting asymmetric tail Kendall's $\tau_{XY}$, in practice, we always have to fix a threshold $q < 1$ and then approximate $\tau_{XY} \approx\widehat \tau_{XY}(q)$. Larger values of $q$ make the approximation $\tau_{XY} \approx \tau_{XY}(q)$ better, but the estimation $\widehat \tau_{XY}(q)$ more unstable, and {vice versa} for lower values of $q$. The choice of this threshold is a therefore a bias-variance trade-off and a long-standing problem in extreme value theory. It is usually resolved by an ad-hoc choice or based on sensitivity analyses or stability plots. The threshold $q$ can also be chosen by the practitioner depending on what events are of particular interest in the application.

\subsection{Properties}

We first note that the asymmetric tail Kendall's $\tau$, both the pre-limit and the limit version, does not depend on the marginal distributions. Similar to the original Kendall's $\tau$, it therefore is only a function of the copula of $(X,Y)$. 
By construction, the asymmetric tail Kendall's $\tau$ also ranges in the interval $[-1 , +1]$.
Positive values mean positive association between extremes, while negative values correspond to negative association. Each single coefficient $\tau_{XY}$ and $\tau_{YX}$ thus tells us something about the strength of association between of the extremes of $(X,Y)$, where the focus is either on those observations where the $X$ or the $Y$ component is large, respectively. Looking at the definition in~\eqref{TailKT}, we see that non-extreme observations in the bulk of the distribution do not impact the asymmetric tail Kendall's $\tau$.
In addition, considering the pair of $(\tau_{XY}, \tau_{YX})$ reveals information on the asymmetry in the extremal structure. If $(X,Y)$ is symmetric, then  $\tau_{XY} = \tau_{YX}$. Otherwise, the difference between the two coefficients is a measure of asymmetry.

A natural question is whether we can compute these coefficients for existing copula models. 
It turns out that this is in fact possible under certain conditions. In~\ref{theory}, we show that the asymmetric tail Kendall's $\tau_{XY}$ can be expressed only in terms of the extreme value copula that corresponds to $(X,Y)$; for background on extreme value copulas we refer to \cite{salvadori2007} and \cite{seg2010}. Formula~\eqref{tau_extreme} in Theorem~\ref{thm_tau_extreme} (in~\ref{theory}) gives an explicit expression in terms of the so-called extremal functions. Since for most extreme value copulas the distribution of the extremal function is known, it is very easy to compute $\tau_{XY}$ numerically through Monte Carlo methods. We provide here three examples.

\begin{example}
If $X$ and $Y$ are independent, then $\tau_{XY}=\tau_{YX}=0$.
\end{example}

\begin{example}\label{ex:HR}
    The class of bivariate H\"usler--Reiss distributions is a parametric family parameterized by $\Gamma \in (0,\infty)$ \citep{HR1989}.  If $(X,Y)$ follows a H\"usler--Reiss distribution, it is well know that the extremal correlation in~\eqref{chi} is given by $\chi_{XY} = 2\{1 - \Phi(\sqrt{\Gamma} / 2)\}$, where $\Phi$ is the distribution function of a standard normal distribution. It can be checked that for $\Gamma \to 0$ the dependence structure approaches complete dependence, and for $\Gamma \to \infty$ independence.
    For this distribution class, we can also compute the asymmetric tail Kendall's $\tau$ in closed form as
    \[\tau_{XY} = \tau_{YX} = 2\exp(\Gamma) \{1 - \Phi(\sqrt{2\Gamma})\}; \]
    seein~\ref{theory} for the details.
    Since H\"usler--Reiss distributions are symmetric, we see that both asymmetric tail Kendall's $\tau$ coefficients are equal. The left panel of Figure~\ref{fig:tau_chi} shows both the extremal correlation and the asymmetric tail Kendall's $\tau$ as a function of the parameter $\Gamma$.
\end{example}

\begin{example}\label{ex:dirichlet}
    The bivariate extremal Dirichlet model distribution is a parametric family with two parameters $\alpha_1, \alpha_2 \in (0,\infty)$. If $(X,Y)$ follows this distribution, neither the extremal correlation nor the asymmetric tail Kendall's $\tau$ can be computed in closed form. Using formula~\eqref{tau_extreme}, the latter can numerically be approximated based on the results on the corresponding extremal functions; see for instance \citet[][Example 3]{eng2020}.
    Extremal Dirichlet models are symmetric only if $\alpha_1 = \alpha_2$, otherwise these two parameters govern the amount of asymmetry. The right panel of Figure~\ref{fig:tau_chi} shows both the extremal correlation and the asymmetric tail Kendall's $\tau$ as a function of the parameter $\alpha_2$ for a fixed value of $\alpha_1 = 2$. We see that in this case the two curves for the asymmetric tail Kendall's $\tau_{XY}$ and $\tau_{YX}$ are different, and only coincide in the symmetric case where $\alpha_1 = 2$.
    Changing $\alpha_1$ results in a similar picture but the overall strength of dependence would shift up or down. 
\end{example}

\begin{figure}[h]
    \centering
    \includegraphics[width=0.49\textwidth]{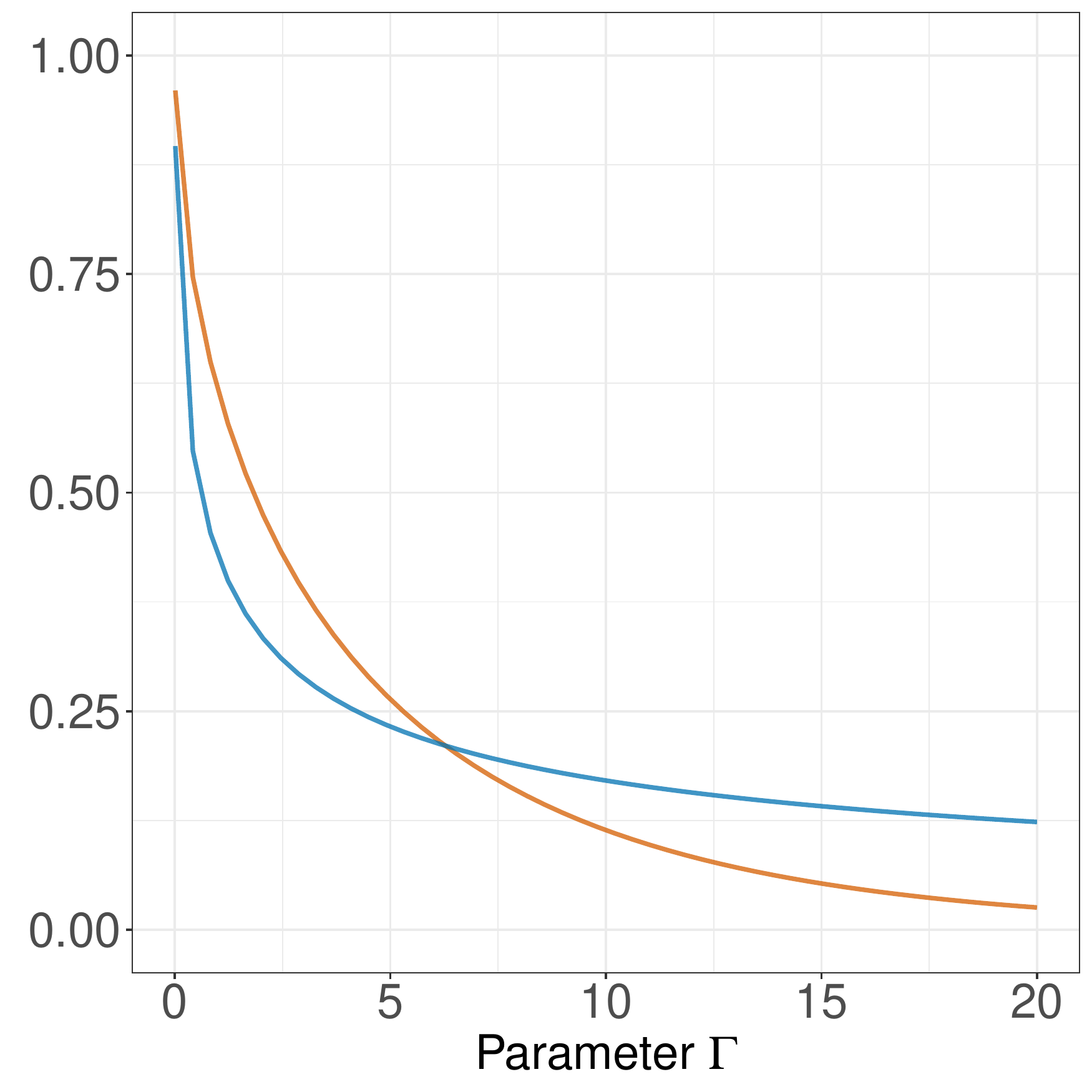}
    \includegraphics[width=0.49\textwidth]{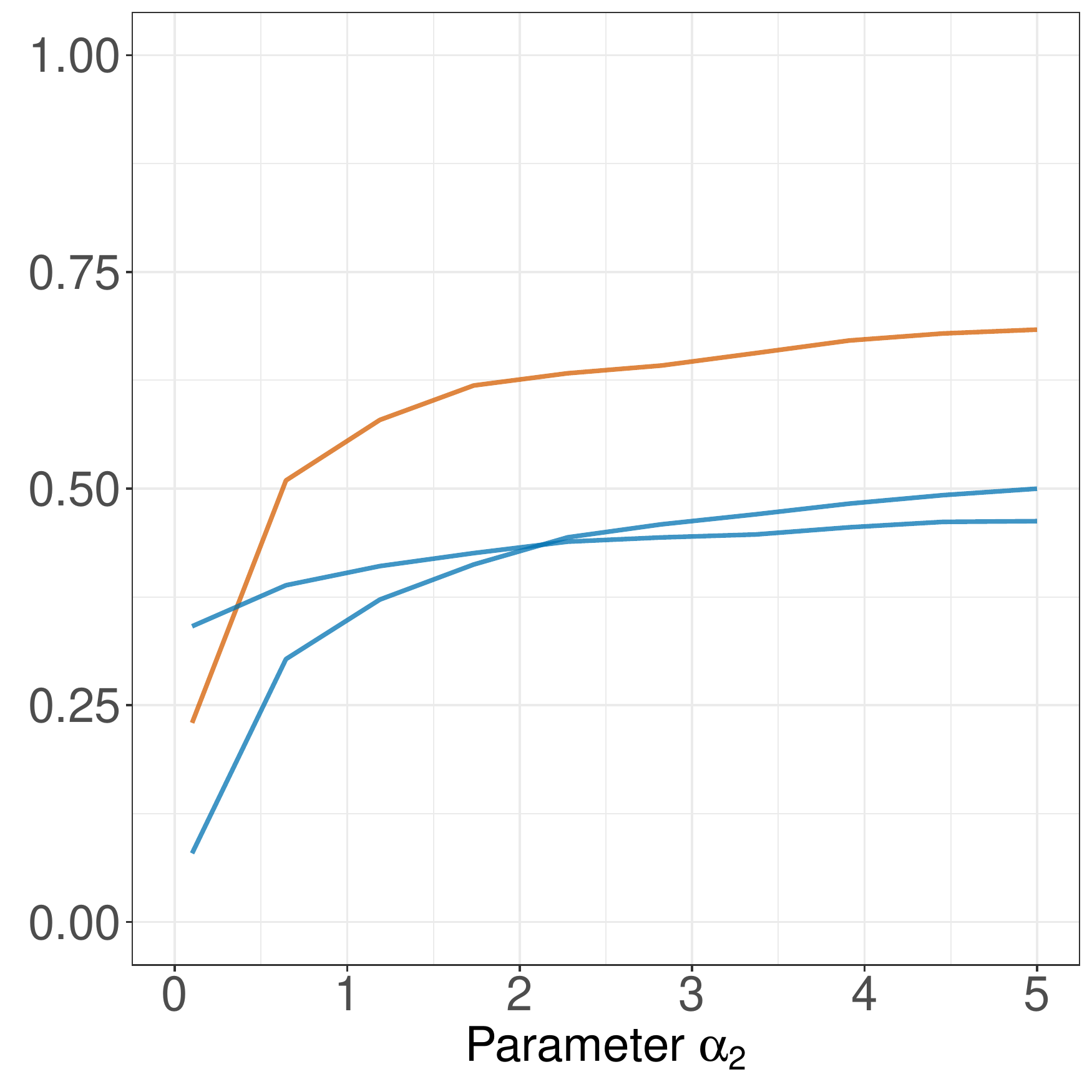}
    \caption{Left: extremal correlation (orange) and asymmetric tail Kendall's $\tau$ (blue) for the H\"usler--Reiss distribution as function of $\Gamma$. Right: same for the Dirichlet family with parameter $\alpha_1=2$ fixed and as function of the second parameter $\alpha_2$.}
    \label{fig:tau_chi}
\end{figure}

\subsection{Numerical simulations}

In hydrological applications, when considering upstream and downstream stations on the same river, there is a natural direction or asymmetry given by the water flow of the river. Due to this fact, many asymmetric copula models have been proposed. This includes the asymmetric versions of the Gumbel--Hougaard, Galambos and H\"usler--Reiss copulas obtained by the asymmetrization technique in \cite{Genest11}.
Our asymmetric tail Kendall's $\tau$ provides us with a method to detect and quantify possible asymmetries. Before applying it to a real-world data set, we conduct several numerical experiments to assess the performance of our approach on simulated data where we know the true dependence.

In the simulations we use the asymmetric logistic copula defined as
\begin{equation}
    C_{\alpha,\beta_1,\beta_2} (u,v) = \exp\left\{-\left[(-\beta_1 \log u)^\alpha + (-\beta_2 \log v)^\alpha\right]^{1/\alpha} + (1- \beta_1) \log u + (1-\beta_2) \log v) \right\}
\end{equation}
where $\beta_1$, $\beta_2$ $\in$ [0,1], $\alpha \geq 1$. The parameters $\beta_1$ and $\beta_2$ determine the asymmetry of the copula and $\alpha$ is specifies the overall strength of dependence \citep{Gagliardini22}. If $\beta_1 = \beta_2 = 0$ then we obtain the independence copula. On the other hand, if $\beta_1 = \beta_2 = 1$, the we recover the symmetric logistic copula.

To perform the simulations we use the function \texttt{remved} in the  \texttt{evd} R package \citep{Evdpackage}. We consider asymmetric logistic copulas with different combinations of parameters. The asymmetry parameters $\beta_1$ and $\beta_2$ range from $0.1$ to $0.9$ with step size $0.1$, and we choose $1/\alpha\in \{0.005,0.2,0.4,0.6,0.8,0.98\}$.
For each of the 486 combinations of the three parameters we simulate $n=1000$ random samples of $(X,Y)$ from the corresponding asymmetric logistic copula. 
For each set of samples we calculate both asymmetric tail Kendall's $\widehat{\tau}_{XY}(q)$ and $\widehat{\tau}_{YX}(q)$, where we fix the probability level to $q =0.98$, which corresponds to $k=20$ exceedances; the results are fairly stable for different values of $q$. We further compute the absolute difference $|\widehat{\tau}_{XY}(q) - \widehat{\tau}_{YX}(q)|$ as a measure of asymmetry and the maximal value $\max\{\widehat{\tau}_{XY}(q), \widehat{\tau}_{YX}(q)\}$ as a measure of overall dependence.

\begin{figure}[tb!]
    \centering
    \includegraphics[width=\textwidth]{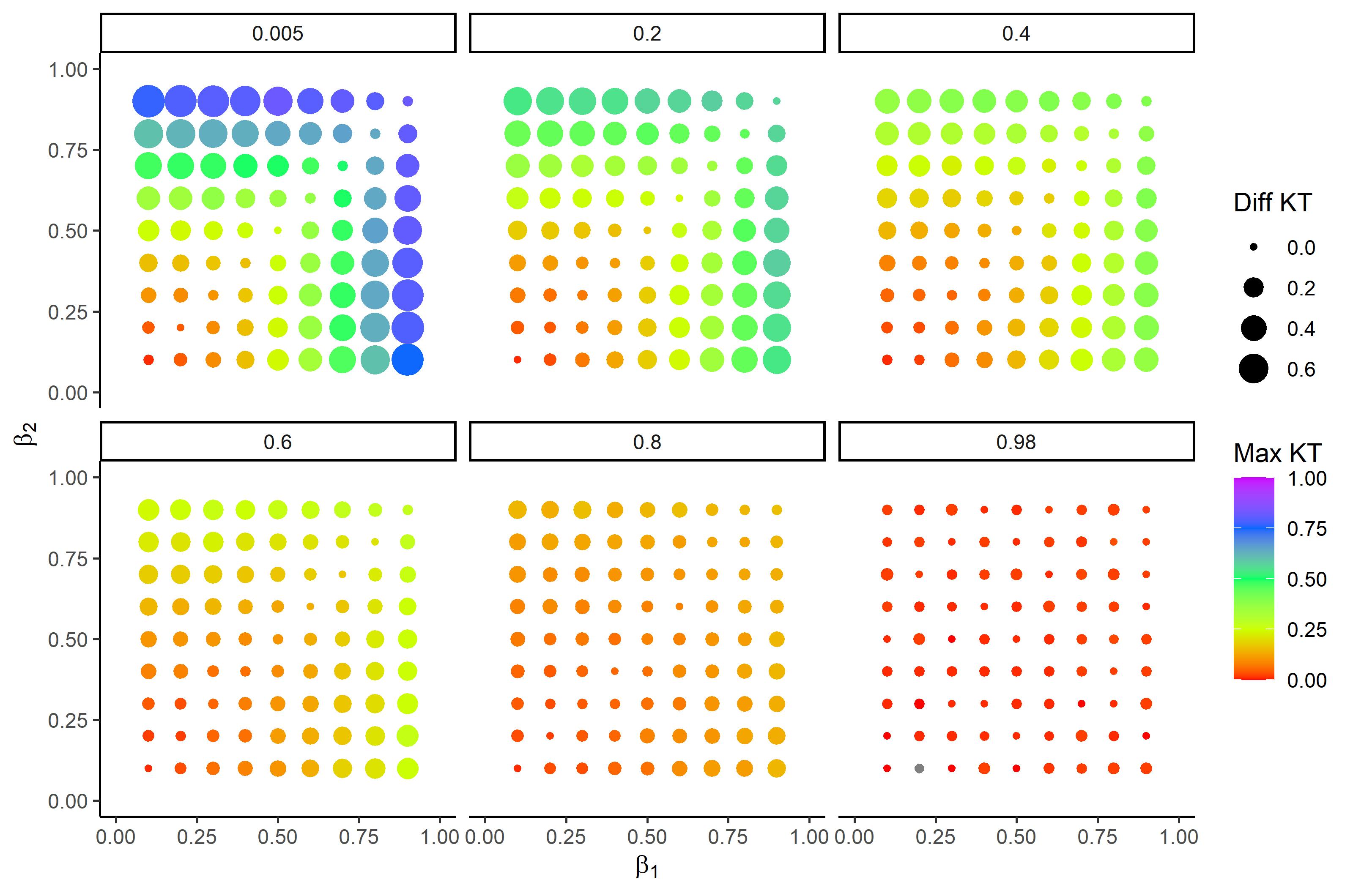}
    \caption{Simulations of sample size with asymmetric logistic copula. Each point is the median value of asymmetric tail Kendall's $\tau$ considering 1000 samples for each set of combinations of Copula parameters: $\beta_1$ on the $x$-axis, $\beta_2$ on the $y$-axis , $1/\alpha$ varying for each panel. The size of the point is the difference of asymmetric tail Kendall's $\tau$ in the two directions for each point and the colour is the maximum value of  asymmetric tail Kendall's $\tau$ for each pairwise. }
    \label{fig:SimAsyCop}
\end{figure}

The results are summarized in Figure~\ref{fig:SimAsyCop}. On the $x$-and $y$-axes there are the values of $\beta_1$ and $\beta_2$, respectively. Each panel shows simulations for a particular value of $1/\alpha$. The size of each point represents the difference of the two asymmetric tail Kendall's $\tau$ values and the colour is the maximum asymmetric tail Kendall's $\tau$ among the two directions. Each point is the median value over 1000 simulations.
When the the dependence parameter $1/\alpha$ is close to one (last panel on the right in Figure~\ref{fig:SimAsyCop}), we have approximately independent samples, which is reflected by coefficients close to zero in both directions. 
Decreasing the value of $1/\alpha$ increases the dependence, which is generally larger asymmetric tail Kendall's $\tau$ values. In these cases, we can appreciate better the asymmetry in the copula given by the different values of $\beta_1$ and $\beta_2$. Looking at the asymmetric tail Kendall's $\tau$ values, we observe that the larger the difference in the these parameters, the larger the difference between the asymmetric tail Kendall's $\tau$ coefficients. Along the diagonals we have $\beta_1=\beta_2$ so we are effectively sampling from symmetric but dependent distributions; the differences of the asymmetric tail Kendall's $\tau$ values are consequently very low there. This confirms that our new pair of coefficients is well suited for detecting and quantifying asymmetry in data. 

To understand why there can be a directionality in the data that leads to asymmetry in the extremal dependence, we recall that a random vector with standard Fr\'echet margins and asymmetric logistic copula $C_{\alpha, \beta_1, \beta_2}$ can be represented as 
\begin{align}\label{al_rep}
 (X,Y)=\left(\max\{\beta_1 V, (1-\beta_1) \varepsilon_1  \}, \max\{\beta_2 W, (1-\beta_2)\varepsilon_2 \} \right),
 \end{align} 
where $\varepsilon_1, \varepsilon_2$ are independent standard Fr\'echet variables and $(V,W)$ follow a symmetric logistic distribution with parameter $\alpha$. We can therefore see an asymmetric logistic distribution as a max-mixture of a symmetric logistic distribution and independent noise. For instance, suppose that $X$ and $Y$ represent the discharges at a downstream and upstream station on the same river, respectively. Then a possible model for the extremal dependence would be~\eqref{al_rep} with $\beta_2 = 1$ since the largest discharges at $X$ would come either from station $Y$ (represented by the strongly dependent $(V,W)$) or from runoff coming from later tributaries independent of $Y$ (represented by $\varepsilon_1$). In this case one can check that $\tau_{XY} < \tau_{YX}$. 
This provides the first intuition why in hydrological data we expect to see asymmetry in the dependence between the largest observations, and more specifically, a directionality from upstream to downstream. 

In order to show that we can capture this directionality with our asymmetric tail Kendall's $\tau$ coefficient, in Figure~\ref{fig:Directional_plot_sim} we plot the points from Figure~\ref{fig:SimAsyCop} according to their asymmetric tail Kendall's $\tau$ values. On the $x$-axis, we always put the asymmetric tail Kendall's $\tau_{XY}$ where $X$ represents the "downstream" station, that is, the variable with the smaller $\beta$ coefficient. We observe that essentially all points are above the diagonal, which indicates that the directionality in the data can be identified from the pair of asymmetric tail Kendall's $\tau$ coefficients: if $\tau_{XY} < \tau_{YX}$ then $X$ is downstream of $Y$. The color coding in the figure shows the difference of $\beta$ coefficients, a proxy for the strength of asymmetry. In general, the larger the difference the more clearly asymmetry is detected.

\begin{figure}[tb!]
    \centering
    \includegraphics[width=\textwidth]{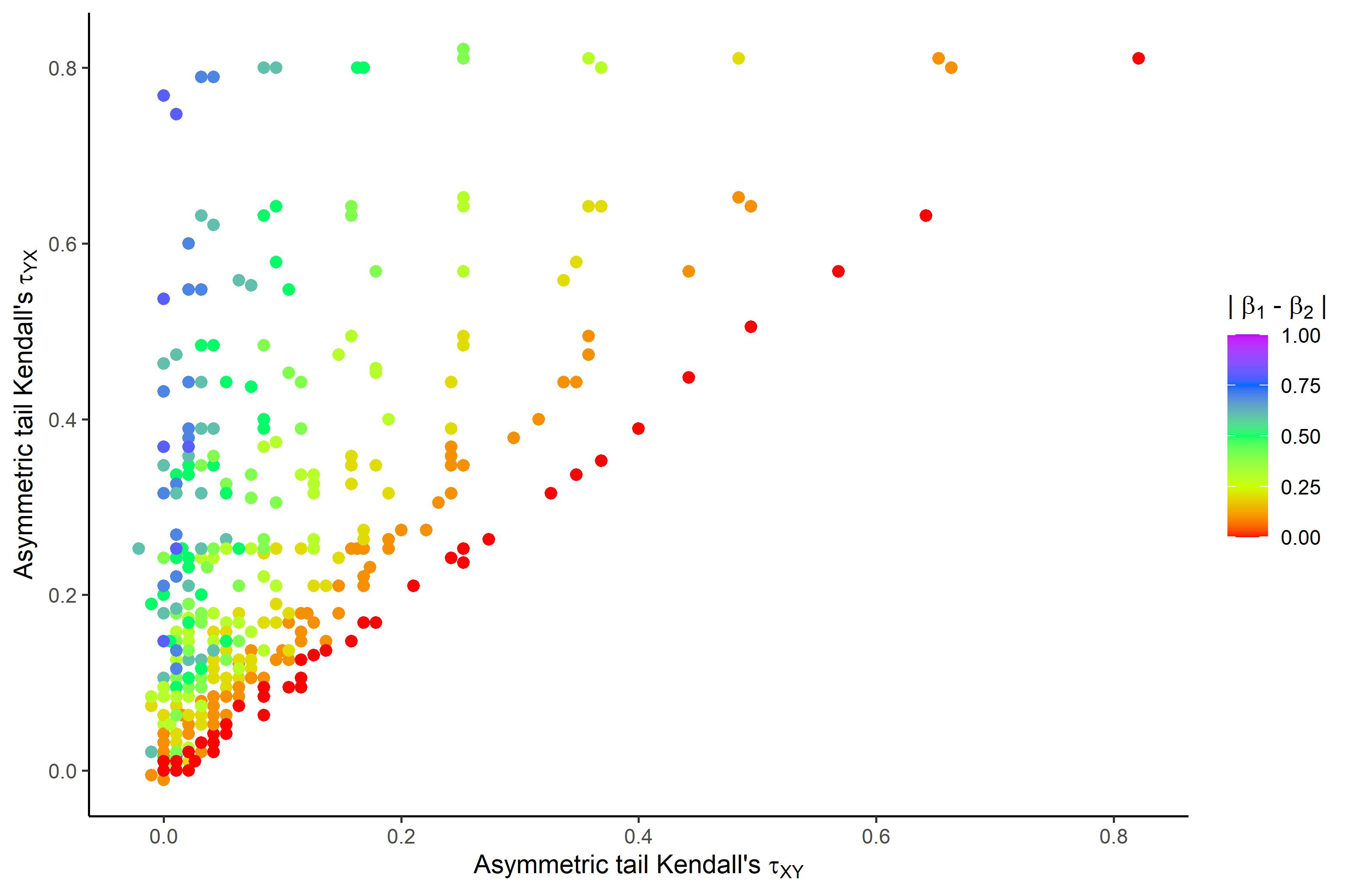}
    \caption{Plot of asymmetric tail Kendall's $\tau$ coefficients for data simulated according to asymmetric logistic copulas; on the $x$-axis is the coefficient that corresponds to conditioning on the "downstream" station, that is, the one with the smaller $\beta$ coefficient. The colors encodes the absolute difference between the $\beta$ coefficients.}
    \label{fig:Directional_plot_sim}
\end{figure}

\subsection{Causality}\label{Causality}
Understanding causal relationships in data is of utmost importance for hydrology, climate research and risk assessment \citep{Ombadi2020}.
Different methodologies have been proposed to study causality for environmental and climate data \citep{Runge2019,DiCapua20,Immecausal}. 
There is a growing literature on methods that focus on detecting causal structures between extreme observations \citep{Mhalla2019, gnecco2021causal, tra2021} or on estimating treatment effects on extreme outcomes \citep{Deuber}.
In this section we discuss that our tail coefficient has a causal interpretation in certain model classes.

One of the most important models in causality is given by linear structural equation models. For two variables $X$ and $Y$, where $X$ causes $Y$, they can be written as 
\begin{align*}
    X &= \varepsilon_1 \\
    Y &= \beta X + \varepsilon_2,
\end{align*}
where $\varepsilon_1, \varepsilon_2$ are independent noise variables.
If the noise variables are non-Gaussian, it has been argued that the asymmetry in the distribution of $(X,Y)$ can be used to detect causal directions \citep{shimizu2006linear}. For heavy-tailed noise, this has been shown to be true also for extremal dependence in general, and for applications in hydrology in particular \citep{gnecco2021causal}.

We perform a simulation study to assess whether our asymmetric tail Kendall's $\tau$ is also able to capture such causal directions in linear structural causal models. We generate $n=4000$ samples of $(X,Y)$ in the above linear structural equation model where $\varepsilon_1$ and $\varepsilon_2$ have Student-t distributions with 3 degrees of freedom. 
Between two variables there are three different configurations (see top row of Figure~\ref{fig:Sixcases}): either $X$ and $Y$ are independent ($\beta = 0$), $X$ causes $Y$ ($\beta\neq 0$) or $Y$ causes $X$ ($\beta \neq 0$ and exchange roles of $X$ and $Y$). In our simulation we choose $\beta = 0.3$ for the latter two situations.
Figure~\ref{fig:SimCausality} shows the boxplots of the asymmetric tail Kendall's $\tau$ estimates with a probability threshold of $q = 0.98$ ($k=80$) based on 1000 repetitions of these simulations. We show how the two asymmetric tail Kendall's $\tau$ coefficients clearly indicate the direction of causality: if $X$ and $Y$ are independent then $\widehat \tau_{XY}(q) \approx \widehat \tau_{YX}(q)$ if $X$ causes $Y$ then $\widehat \tau_{XY}(q) > \widehat \tau_{YX}(q)$, and vice versa if $Y$ causes $X$. 
As comparison we also show the boxplots of estimates of a symmetric version of asymmetric tail Kendall's $\tau$, which is the classical Kendall's $\tau$ applied to the data where both variables exceed the $q$-quantile. We see that this coefficient measures the overall dependence in the data, but cannot be used to detect asymmetries or causal links.

In causality, it is possible that the two variables $X$ and $Y$ are further influenced by a possibly unobserved confounder (e.g., precipitation for river discharge); see bottom row of Figure~\ref{fig:Sixcases}. In this case it becomes more challenging to detect whether there is a causal link between the variables, and in which direction it points. We run the simulations again but now with an unobserved confounder $C$. The bottom row of Figure~\ref{fig:SimCausality} shows that then the difference of the asymmetric tail Kendall's $\tau$ coefficients becomes less pronounced when there is a causal link. While causal detection becomes more difficult, it can be seen that it is still possible in this framework.

\begin{figure}[h]
    \centering
    \includegraphics[width=\textwidth]{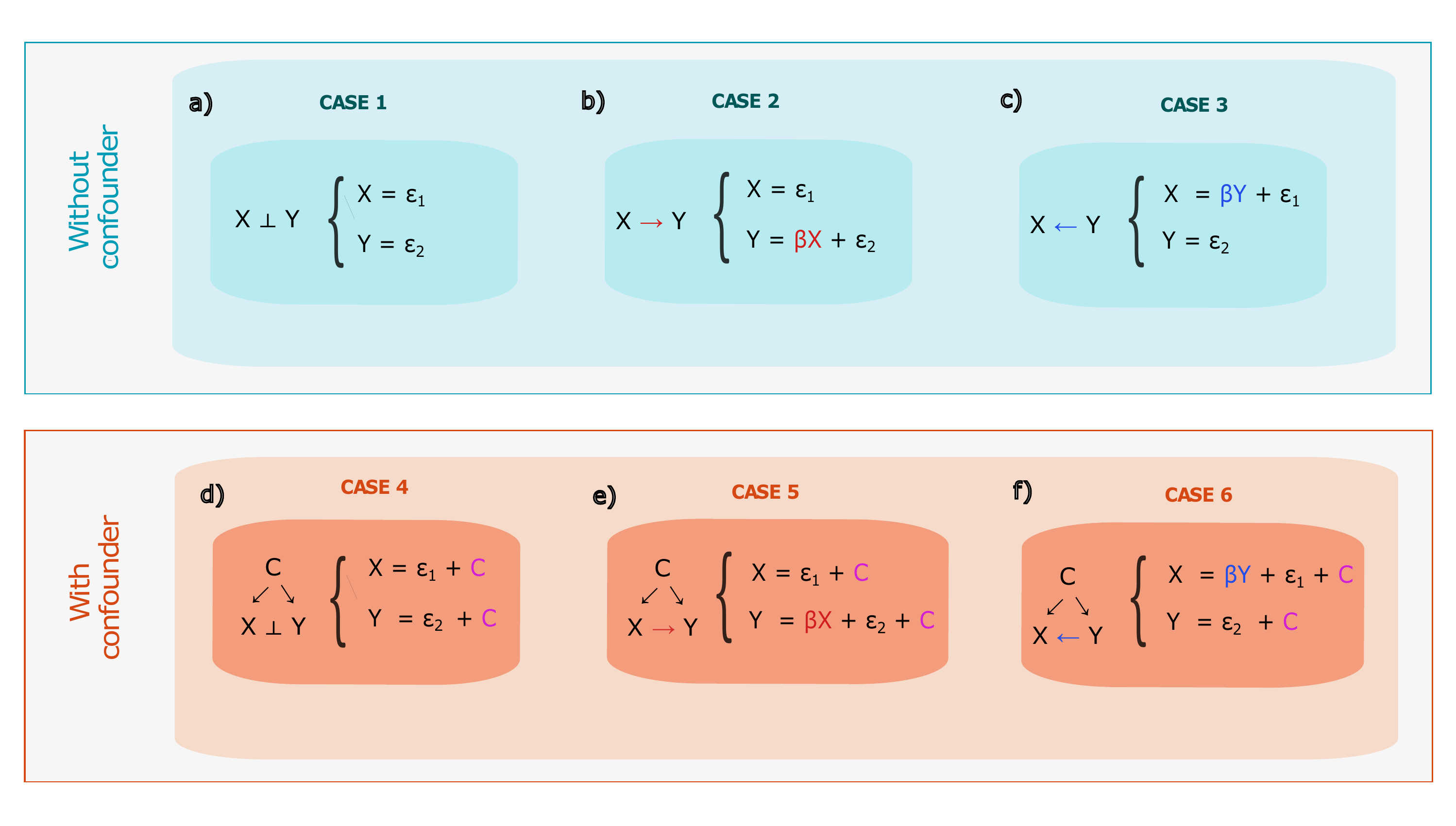}
    \caption{Six possible causal models of two variables $X$ and $Y$ without (top row) and with (bottom row) confounder $C$. In panels a) and d) $X$ and $Y$ are independent, in panels b) and e) X causes $Y$, and in panels c) and f) Y causes $X$.}
    \label{fig:Sixcases}
\end{figure}

\begin{figure}[h]
    \centering
    \includegraphics[width=\textwidth]{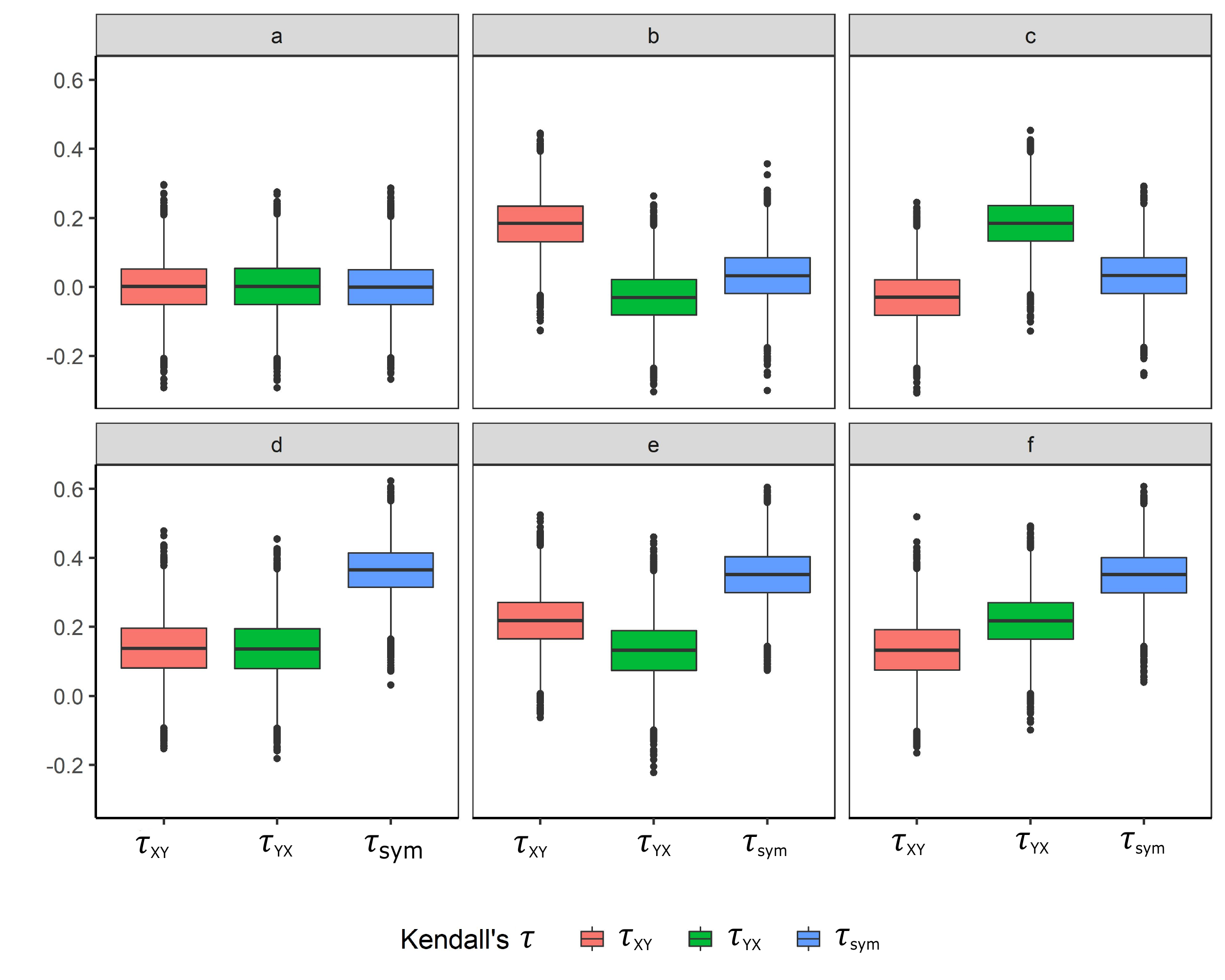}
    \caption{Boxplots of estimates of $\tau_{XY}$, $\tau_{YX}$ and the symmetric asymmetric tail Kendall's $\tau$ ($\tau_{\text{sym}}$) for 1000 simulations of the six possible causal models.}  
    \label{fig:SimCausality}
\end{figure}

\section{Hydrological application}

\subsection{Case study and data analysis}

\begin{figure}[htb!]
    \centering
    \includegraphics[width=0.6\textwidth]{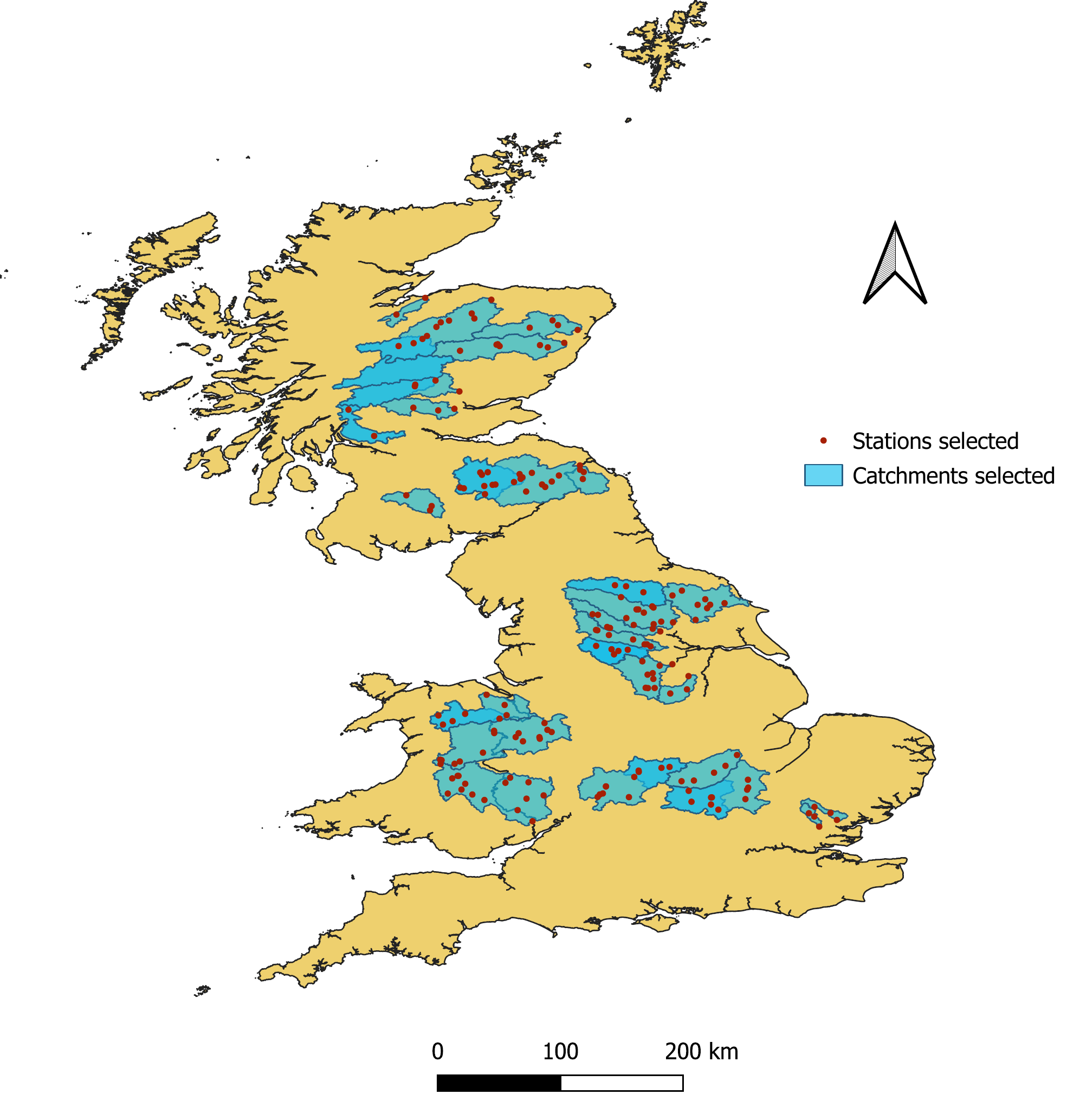}
    \caption{Stations (178) and catchments (18) selected in the United Kingdom.}
    \label{fig: map_Uk}
\end{figure}

\begin{figure}[htb!]
    \centering
    \includegraphics[width=\textwidth]{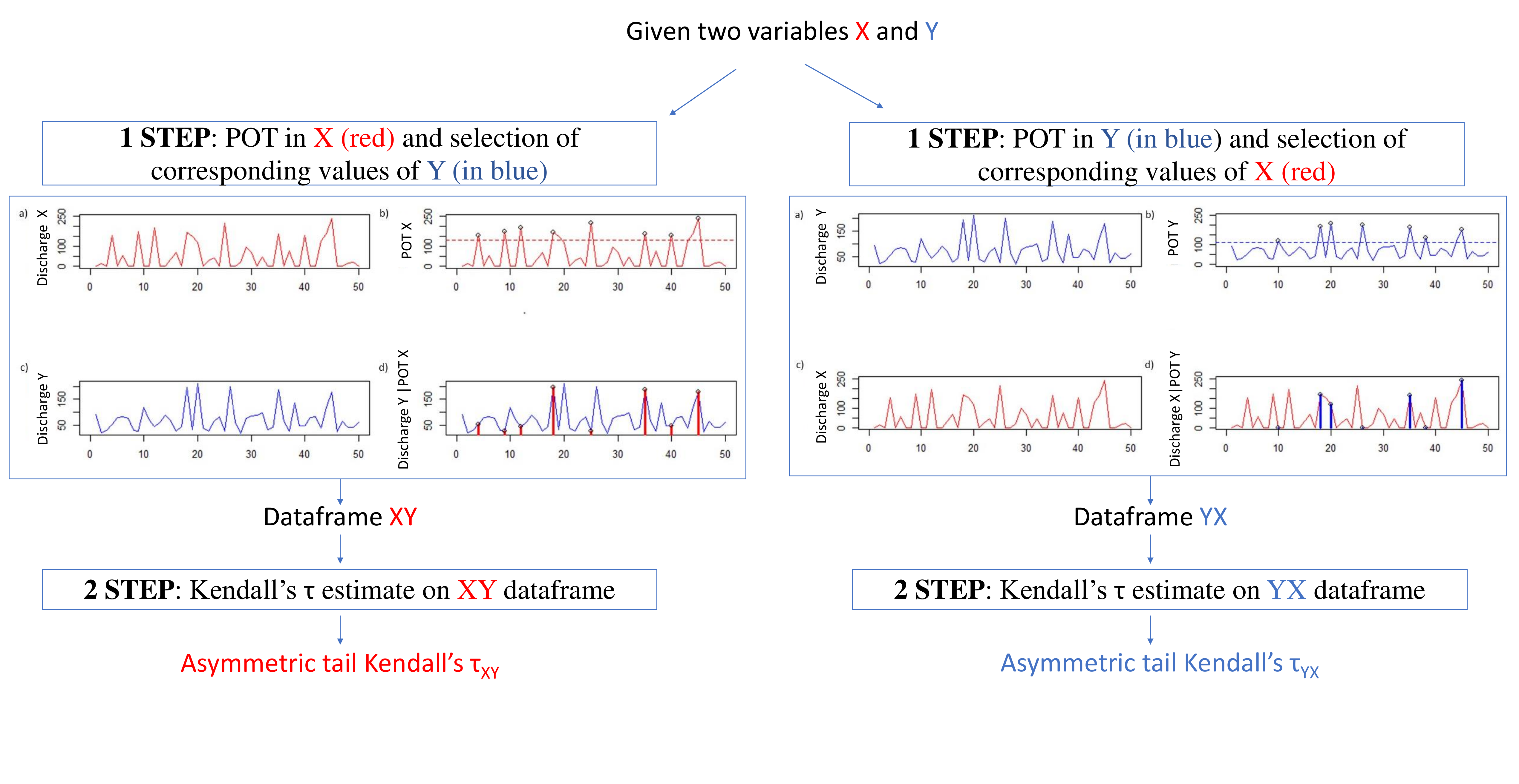}
    \caption{Procedure for extracting the observations and computing the asymmetric tail Kendall's $\tau$ in both directions:  in left panel $\tau_{XY}$, and in right panel $\tau_{YX}$.}
    \label{fig: KTscheme}
\end{figure}

Extreme value analysis in river networks is of primary importance for hydrological applications \citep[e.g.,][]{asa2018,eng2018}. There is a natural directionality that arises from the fact that some stations are located upstream or downstream  relative to other stations. This directionality is likely to induce an asymmetry in the dependence structure between the largest observations. In this section we apply our asymmetric tail Kendall's $\tau$ defined in Section~\ref{sec:def} to daily discharge data on 18 basins in the United Kingdom. The data set is obtained from the National River Flow Archive (available at \texttt{https://nrfa.ceh.ac.uk/}) for the United Kingdom. It has been used in many other works including \citep{deidda2021causes,Fry2013,Robson1999,Dixon2013,Chang2003a}. We focus on 18 catchments for which there are at least 2 stations in the main river channel and where daily data are available. In total there are 178 gauging stations.

For our data analysis we consider all the possible pairs of stations resulting in $15,753$ combinations. For any pair of stations $X$ and $Y$, we select all years where both stations have observed data. We then estimate the two asymmetric tail Kendall's $\tau$ coefficients $\widehat \tau_{XY}(q)$ and $\widehat \tau_{YX}(q)$ at probability level $q = 0.98$ following the scheme reported in Figure~\ref{fig: KTscheme}.
From the intuition discussed after~\eqref{al_rep}, we expect to see asymmetry in the extremal dependence whenever $X$ is upstream or downstream of $Y$. In particular if $X$ is downstream of $Y$, we conjecture that $\tau_{XY} < \tau_{YX}$. If there is no such connection, either because the stations are on different tributaries or even in different basins, we would expect symmetry between the two stations and approximate equality between the two coefficients. In order to check these conjecture on our data, we extract from a Geographic Information Systems (GIS) the sub-catchments for each stations and for each pair determine which relation they have.

This analysis is a check whether our asymmetric tail Kendall's $\tau$ also shows the desired properties on real data that have been observed in the previous sections in simulations. Beyond this, such an exploratory analysis is crucial to determine the type of statistical model that can be used in subsequent steps. Our new coefficients allows to assess not only the strength of extremal dependence but also a possible asymmetry. 

\subsection{Results}

\begin{figure}[tb!]
    \centering
    \includegraphics[width=\textwidth]{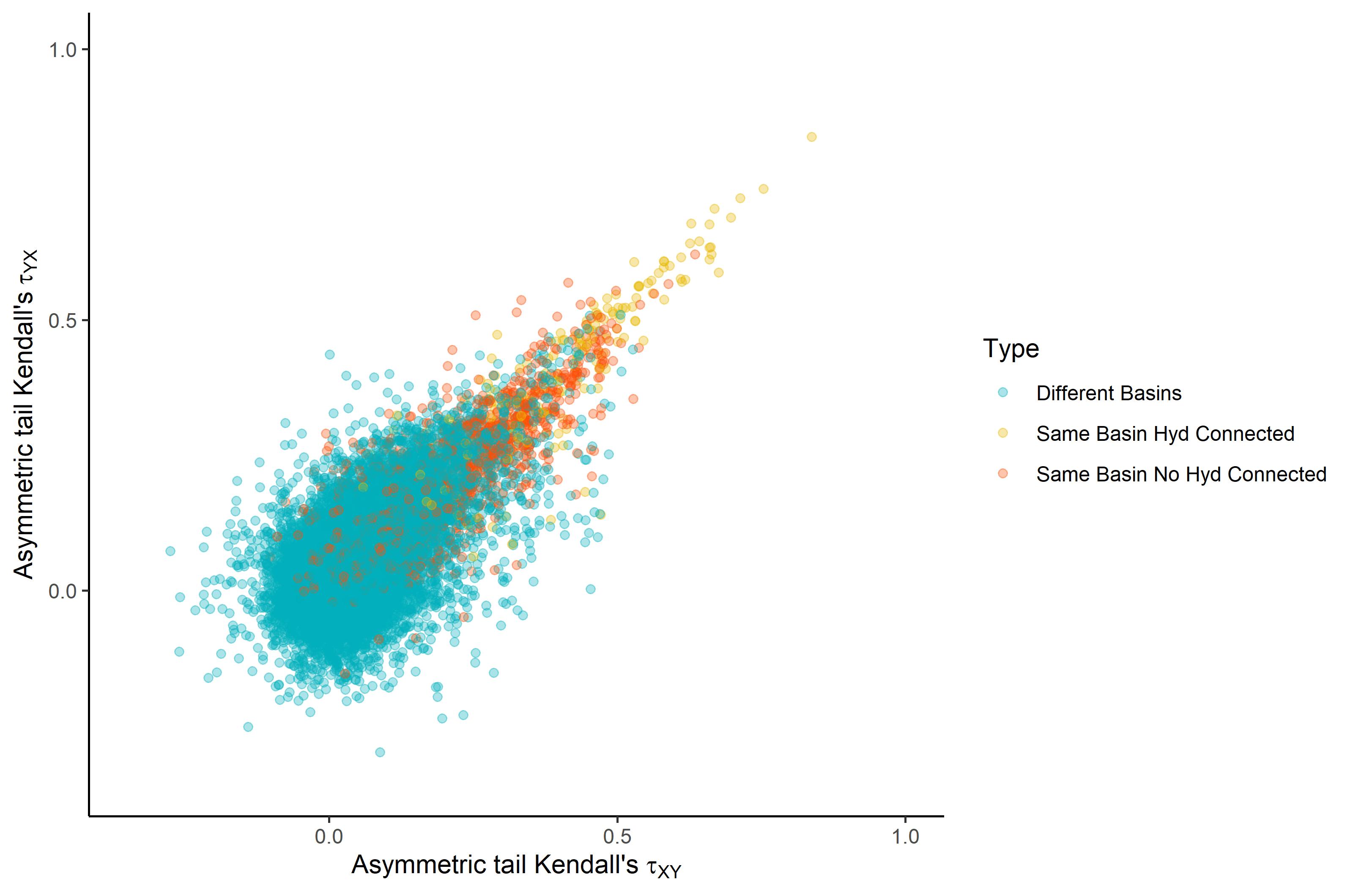}
    \caption{Asymmetric tail Kendall's $\tau$ in both direction considering $15,753$ couples. Blue, red and yellow points are couples from different basins, hydraulically connected couples in the same basin and hydraulically unconnected in the same basin, respectively. }
    \label{fig: KT_group}
\end{figure}

The estimates of all asymmetric tail Kendall's $\tau$ coefficients for both directions are shown in  Figure~\ref{fig: KT_group}. The points are grouped with different colours according to whether they are in different basins or in the same basin, and in the latter case whether they are hydraulically connected or not. 
We observe that the stations that are in different basins concentrate in the lower bottom corner, indicating that their association is the weakest. This is to be expected since they are not hydraulically connected and might be impacted from fairly different precipitation events.
Stations that are in the same river basin tends to have stronger extremal dependence, shown by larger asymmetric tail Kendall's $\tau$ values. Unsurprisingly, those pairs that are hydraulically connected exhibit the strongest association.
Note that this plot is symmetric since we did not order the coefficients from upstream to downstream as in Figure~\ref{fig:Directional_plot_sim}. The reason is that most stations are not hydraulically connected and such an ordering would therefore not make sense.

To better explore the asymmetric tail behaviour of hydraulically connected couples, similarly to Figure~\ref{fig:Directional_plot_sim}, we plot in Figure~\ref{fig:DirectionalKT} the corresponding asymmetric tail Kendall's $\tau$ coefficients ordered from upstream to downstream on the $x$-axis, and \emph{vice versa} on the $y$-axis.
We see a clear asymmetry in the plots: most of the points are above the diagonal meaning that $\tau_{XY} < \tau_{YX}$ if $X$ is downstream of $Y$. This confirms our conjecture based on the theory and the simulations in Section~\ref{sec:methods}.
Comparing to Figure~\ref{fig:Directional_plot_sim}, we observe that in real data the detection of asymmetry can be more subtle than in certain simulations, and therefore it is important to have tailored methods such as our asymmetric tail Kendall's $\tau$.

\begin{figure}[tb!]
    \centering
    \includegraphics[width=0.8\textwidth]{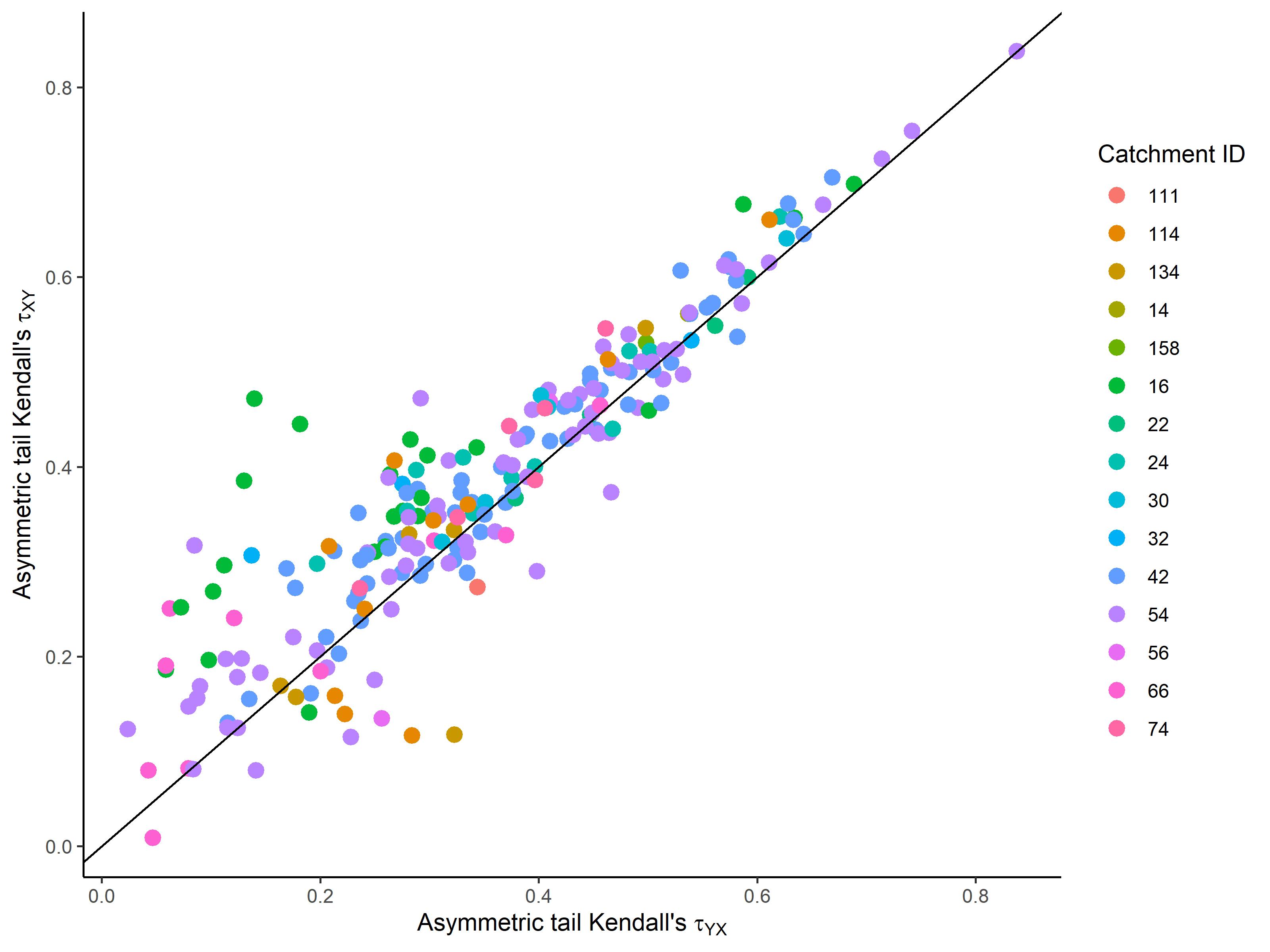}
    \caption{Plot of asymmetric tail Kendall's $\tau$ coefficients for hydraulically connected pairs; on the $x$-axis is given the coefficient that corresponds to conditioning on the ”downstream” station, that is.}
    \label{fig:DirectionalKT}
\end{figure}

While in general, our coefficients agree with the direction of the stream, there are some points that do not and whose points are below the diagonal. A prominent example is a point in the Avon basin (catchment ID = 114), which is far below the diagonal in Figure~\ref{fig:DirectionalKT}.
We look into this basin in more detail in Figure~\ref{fig:map_comparison57}. In panel a) we show the real direction of river flow and in panel b) the results of our analysis, where the arrow is from $Y \to X$ if $\tau_{XY} < \tau_{YX}$.
It can be seen that there are two arrows on the northern part that do  not correspond to the real direction of the flow. In fact, this behaviour can be related to the Stanford Reservoir, which is placed in the upper part of the Avon basin and influences the most upstream station. The presence of dams and reservoirs can influence the peak runoff, which then impacts the observed direction of dependence. 
It is interesting that we can detect such behavior in the data. It is an important indication for statistical modeling of this basin.

The results of Figures~\ref{fig:DirectionalKT} and~\ref{fig:map_comparison57} explain the difference in asymmetric tail Kendall's $\tau$ for hydraulically connected points.
For stations in the same basin that are not hydraulically connected or stations in different basins, in Figure~\ref{fig: KT_group} we may still see difference in these coefficients. The reason for this can be either estimation error, or other meteorological conditions that introduce asymmetry in the river discharges.

\begin{figure}[tb!]
    \centering
    \includegraphics[width=1.1\textwidth]{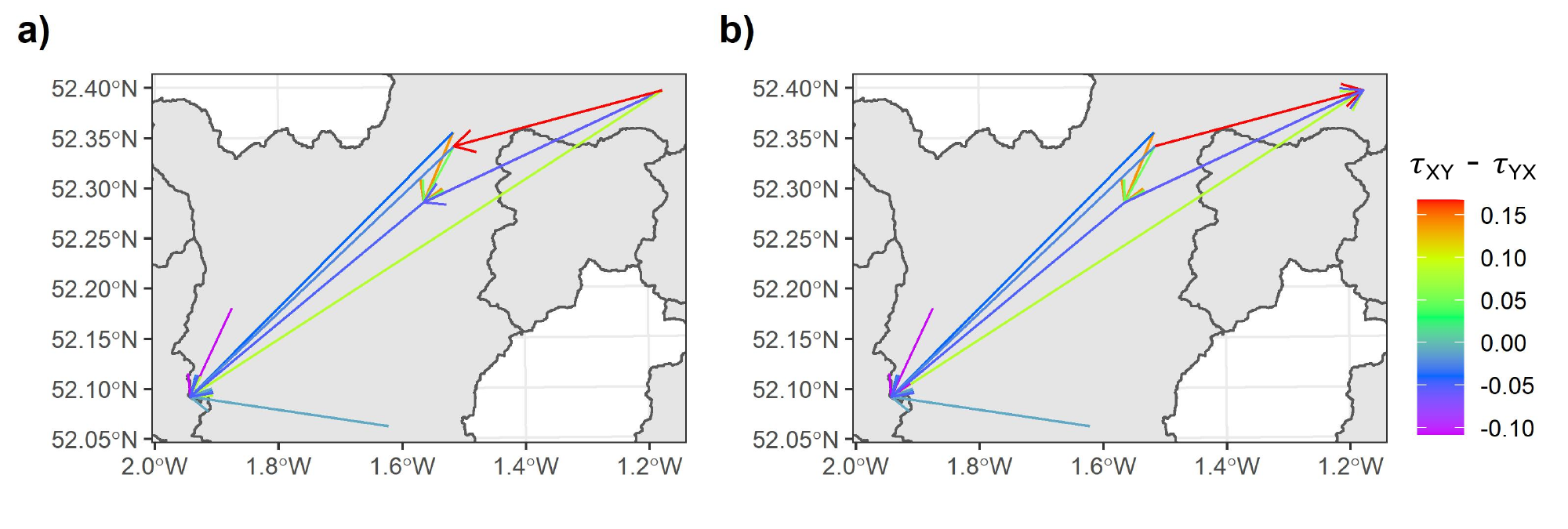}
    \caption{Comparison among real flow directions in panel a) and arrows in panel b) based on asymmetric tail Kendall's $\tau$ estimates for each couple in the Avon catchment.}
    \label{fig:map_comparison57}
\end{figure}

\section{Conclusions}

The study of dependence in extremes is of primary importance in hydrology, climate science and many other fields. It can be used for better prediction of flood events \citep{Olivier2022} or for causal analysis between different variables. 
The common practice for the analysis of extremal dependence is to rely on symmetric summary statistics such as the extremal correlation in~\eqref{chi}. 
In climate science, the study of compound extremes
has attracted increasing attention in recent years. 
Also here, mostly symmetric coefficients are used for model assessment \citep{bou2022} or clustering methods \citep{vig2021, zsc2021}.

In this work, we propose a "conditional" tail version of Kendall's $\tau$ that is able to detect and quantify asymmetric relationship among variables. The asymmetric tail Kendall's $\tau$ is computed for two directions, first conditioning on one variable and then on the other. It only takes into account the largest observations (in hydrology corresponding to flood events) of a data set, and therefore assesses tail risk.
We show in simulations and real data the effectiveness of our coefficient. We also discuss how it can be used to identify causal directions in linear structural equation models. 

In the hydrological application we find that asymmetric tail Kendall's $\tau$ performs well to detect asymmetries between upstream and downstream stations. For unconnected stations, such asymmetry may also be detected. This is statistically relevant but must be explained on a case-to-case basis based on domain knowledge on the meteorological conditions of the catchment.
The practical benefit of this insight is that models can be chosen to incorporate this asymmetry appropriately, for instance through using asymmetric copula models. Also after the model fitting, estimates of asymmetric tail Kendall's $\tau$ give a more refined way of checking the model fit than simpler symmetric coefficients.

\bibliographystyle{Chicago}
\bibliography{Biblio}

\appendix

\section{Theoretical results}\label{theory}

We recall that we assume that $X$ and $Y$ have continuous distribution functions $F_X$ and $F_Y$, respectively.
The original Kendall's $\tau$ is defined in~\eqref{OriginalKT} and can be represented by either the joint distribution function $F$ or the copula $C$ of the bivariate random vector $(X,Y)$. Indeed, we have that
\begin{align}\label{alt_rep}
\tau(X,Y) = 4 \mathbb E F(X,Y) - 1 = 4 \mathbb E C(U,V) - 1,
\end{align}
where $(U,V) \sim C$ is the random vector with distribution function $C$ \citep{salvadori2007}.

Since the pre-limit asymmetric tail Kendall's $\tau_{XY}(q)$ is invariant under monotone marginal transformations for any $q \in [0,1]$, this also holds for the limiting version $\tau_{XY}$. We can therefore define normalized variables $\tilde X = 1 / (1- F_X(X))$ and $\tilde Y = 1 / (1- F_Y(Y))$ that have standard Pareto distributions, and note that $\tau_{XY}(q) = \tau_{\tilde X \tilde Y}(q)$ and $\tau_{XY} = \tau_{\tilde X \tilde Y}$.

We concentrate here on models with positive upper tail dependence, also called asymptotical extremal dependence, which is characterized by $\chi_{XY} > 0$.
The assumption of bivariate regular variation is standard in extreme value theory; it is satisfied for instance for all extreme value copulas such as the H\"usler--Reiss distribution and extremal Dirichlet model in Examples~\ref{ex:HR} and~\ref{ex:dirichlet}, respectively.
It implies that the exceedances of $(X,Y)$ over a high threshold converge to a limit distribution. Indeed, for any sequence $q_n \to 1$ as $n\to \infty$, define the threshold $u_n = F_X^{-1}(q_n) $. Then we have the convergence in distribution 
\begin{align}\label{extremal_function}
     (\tilde X / u_n, \tilde Y / u_n) \mid X > u_n  \Rightarrow (P, PW^1_2),
\end{align} 
where $P$ has a standard Pareto distribution with $\mathbb P( P \leq x) = 1 - 1/x$, $x \geq 1$, and $W^1_2\geq 0$ is a non-negative random variable with $\mathbb E W^1_2 = 1$ called the extremal function of variable $Y$ relative to $X$. We assume in the sequel that $W^1_2$ possesses a Lebesgue density on $(0,\infty)$ and therefore is strictly positive. By exchanging the roles of $X$ and $Y$, we can also define the extremal function $W^2_1$, which we also assume to be strictly positive almost surely.

\begin{theorem}\label{thm_tau_extreme}
    Let $(X,Y)$ be multivariate regular varying with extremal functions $W^1_2$ and $W^2_1$. Then the asymmetric tail Kendall's $\tau$ be expressed as
\begin{align}\label{tau_extreme}
\tau_{XY} = 2\mathbb E \min(1,\tilde W^1_2 / W^1_2) - 1, \qquad 
\tau_{YX} = 2\mathbb E \min(1,\tilde W^2_1 / W^2_1) - 1,
\end{align}
$\tilde W^1_2$ and $\tilde W^2_1$ are independent copies of $W^1_2$ and $W^2_1$, respectively.
\end{theorem}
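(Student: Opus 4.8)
The plan is to work throughout with the Pareto-normalized variables, use the regular-variation limit to replace the conditioned vector by its weak limit, and then reduce the resulting limit probability to a one-dimensional ratio computation. By the marginal invariance noted above I may assume $X, Y$ have standard Pareto margins, so that $F_X^{-1}(q) = 1/(1-q) =: u$ with $u \to \infty$ as $q \to 1$. Writing $(X_1, Y_1)$ and $(X_2, Y_2)$ for two independent copies of $(X,Y)$, the pre-limit coefficient reads
\[ \tau_{XY}(q) = 2\mathbb P\big((X_1 - X_2)(Y_1 - Y_2) > 0 \mid X_1 > u, X_2 > u\big) - 1. \]
The key observation is that this concordance event is invariant under dividing every coordinate by $u > 0$, so it may equivalently be expressed through the rescaled vectors $(X_i/u, Y_i/u)$.

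Next I would pass to the limit $u \to \infty$. Because the two copies are independent, conditioning on $\{X_1 > u\} \cap \{X_2 > u\}$ factorizes, and applying the convergence in~\eqref{extremal_function} to each copy gives joint convergence of $\big((X_1/u, Y_1/u), (X_2/u, Y_2/u)\big)$, conditionally on both exceedances, to $\big((P, PW^1_2), (P', P'\tilde W^1_2)\big)$, where $(P, W^1_2)$ and $(P', \tilde W^1_2)$ are independent and $\tilde W^1_2$ is an independent copy of $W^1_2$. Since the concordance event is a scale-invariant open set whose boundary $\{(P - P')(PW^1_2 - P'\tilde W^1_2) = 0\}$ carries no mass under the limit law — here the assumption that $W^1_2$ has a Lebesgue density is used to rule out ties — the Portmanteau theorem yields
\[ \tau_{XY} = 2\mathbb P\big((P - P')(PW^1_2 - P'\tilde W^1_2) > 0\big) - 1. \]

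It then remains to evaluate this limit probability. By exchangeability of the two copies the concordance probability equals $2\mathbb P(P > P', PW^1_2 > P'\tilde W^1_2)$. Conditioning on $(W^1_2, \tilde W^1_2) = (w, w')$, this reduces to the event $R > \max(1, w'/w)$ for the ratio $R := P/P'$ of two independent standard Paretos; a short computation gives $\mathbb P(R > r) = 1/(2r)$ for $r \ge 1$, so the conditional probability is $\tfrac12\min(1, w/w')$. Hence $\mathbb P(P > P', PW^1_2 > P'\tilde W^1_2) = \tfrac12\mathbb E\min(1, W^1_2/\tilde W^1_2)$, and using the symmetry of the i.i.d.\ pair to rewrite this as $\min(1, \tilde W^1_2/W^1_2)$ gives $\tau_{XY} = 2\mathbb E\min(1, \tilde W^1_2/W^1_2) - 1$. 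The formula for $\tau_{YX}$ follows by exchanging the roles of $X$ and $Y$.

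I expect the main obstacle to be the rigorous interchange of limit and probability. One must verify joint weak convergence of the two conditioned, rescaled copies — which follows from independence once the single-copy limit~\eqref{extremal_function} is in hand — and then justify convergence of the probability of the \emph{open} concordance set. The latter is the delicate point, since weak convergence only controls probabilities of continuity sets; the density assumption on the extremal functions is precisely what guarantees the boundary is a null set, so that Portmanteau applies. Everything downstream, namely the exchangeability reduction and the ratio-of-Paretos computation, is routine.
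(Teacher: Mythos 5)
Your proof is correct and reaches the paper's formula, but it differs from the paper's argument in both key technical steps, so it is a genuinely different route within the same overall strategy (replace the conditioned, rescaled exceedances by their weak limit $(P, PW^1_2)$, then compute the coefficient for that limit). For the limiting step, the paper converts the pre-limit coefficient via the copula representation~\eqref{alt_rep} into $4\,\mathbb E\, G_n(W_n,Z_n)-1$ and proves convergence of this expectation by combining a Polya-type uniform convergence theorem for distribution functions \citep[][Theorem 4.2]{rao1962} with weak convergence of $(W_n,Z_n)$; you instead keep the concordance probability of two independent conditioned copies and apply the Portmanteau theorem directly, checking that the boundary $\{P=P'\}\cup\{PW^1_2=P'\tilde W^1_2\}$ is a null set of the product limit law. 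For the evaluation step, the paper computes $\mathbb E\, G(P,PW^1_2)$ by an explicit four-fold integral with the substitution $x=z\tilde x$, whereas you use exchangeability of the two copies together with the fact that the ratio of two independent standard Pareto variables satisfies $\mathbb P(P/P'>r)=1/(2r)$ for $r\ge 1$, which yields the conditional probability $\tfrac12\min(1,w/w')$ and hence $\tau_{XY}=2\,\mathbb E\min(1,\tilde W^1_2/W^1_2)-1$ in a few lines. Your route is more elementary and self-contained: it needs neither the representation~\eqref{alt_rep} nor the uniform-convergence lemma, only the standard Portmanteau theorem and a one-line ratio computation; the paper's route, in exchange, stays within the classical copula formalism for Kendall's $\tau$, which connects naturally to the rest of its exposition. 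Two small remarks: both arguments tacitly use that $P$ and $W^1_2$ are independent in the limit~\eqref{extremal_function} (the paper when writing the limit density as $x^{-2}w(y)$, you when conditioning on $(W^1_2,\tilde W^1_2)$ and invoking the Pareto ratio law) --- this is standard for regularly varying vectors but deserves an explicit sentence; and for your null-boundary check, what is really needed is that $W^1_2>0$ almost surely and that $PW^1_2$ is atomless, which is weaker than the Lebesgue-density assumption you cite, so there is no gap, only a slight overstatement of what the density hypothesis is used for.
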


\begin{proof}

Define the random vectors $(W_n, Z_n)$ and $(W,Z)$ as the conditional random vector on the left-hand side and the limiting random vector on the right-hand side of~\eqref{extremal_function}, respectively, and denote by $G_n$ and $G$ their bivariate distribution functions. We recall that since $(W_n, Z_n)$ converges weakly (in distribution) to $(W,Z)$  as $n\to \infty$, and since $G$ is absolutely continuous with respect to Lebesgue measure, we have the uniform convergence \citep[][Theorem 4.2]{rao1962}
\begin{align}\label{unif_conv}
\lim_{n\to \infty} \sup_{x,y \in [0,\infty] \times \mathbb R} | G_n(x,y) - G(x,y) | =0.
\end{align}
To show convergence of asymmetric tail Kendall's $\tau$ we first note that
\[\tau_{XY}(q_n) = \tau_{\tilde X \tilde Y}(q_n) = \tau(W_n Z_n),\]
where the second equations holds since the conditioning $X > F_X^{-1}(q_n)$ is already included in the definition of $(W_n,Z_n)$.
Using the representation~\eqref{alt_rep} of the original Kendall's $\tau$, we can rewrite
\begin{align*}
     \tau_{W_n Z_n} = 4 \mathbb E G_n(W_n,Z_n) - 1. 
\end{align*} 
In order to show that the expectation converges to $\mathbb E G(W,Z)$ as $n\to \infty$, we observe
\begin{align*}
     |\mathbb E G_n(W_n,Z_n) - \mathbb E G(W,Z) | \leq \mathbb E |G_n(W_n,Z_n) - G(W_n,Z_n)|   + |\mathbb E G(W_n,Z_n) - \mathbb E G(W,Z) |.
\end{align*} 
The first term can be bounded by 
\begin{align*}
   \mathbb E |G_n(W_n,Z_n) -  G(W_n,Z_n)|  \leq \sup_{x,y \in [0,\infty] \times \mathbb R} | G_n(x,y) - G(x,y) | 
\end{align*} 
by the uniform convergence~\eqref{unif_conv} it converges to 0. The second term converges also to 0 since $G$ is a continuous, bounded function and by weak convergence $\lim_{n\to \infty} \mathbb E G(W_n,Z_n) = \mathbb E G(W,Z)$.
Consequently,
\[ \tau_{XY} = \lim_{n \to \infty} \tau_{XY}(q_n) = 4 \mathbb E G(W,Z) - 1 = \tau(P, P W^1_2).\]

We will now derive a closed form expression for the asymmetric tail Kendall's $\tau(P, P W^1_2)$. Recall that $(W,Z) = (P, PW^1_2)$. Let $w$ be the density of $W^1_2$ and recall that the density of $P$ is $x^{-2}$ for all $x \geq 1$. We compute (all integrals are over $(0,\infty)$)
\begin{align*}
  \mathbb E G(P,PW^1_2) &= \mathbb E \int \int \einsfun \{ x \geq 1, x \leq P, xy \leq PW^1_2\} x^{-2} w(y) \D x \D y \\
  & = \int \int \int \int \einsfun \{x \geq 1, \tilde x \geq 1, x \leq \tilde x, xy \leq \tilde x \tilde y \} x^{-2} \tilde x^{-2}  w(y) \tilde w(y)\D x \D y \D \tilde x \D \tilde y \\
   & = \int \int \int \int \einsfun \{z \tilde x \geq 1, \tilde x \geq 1, z \leq 1, z  \leq \tilde y / y \} z^{-2} \tilde x^{-3}  w(y) \tilde w(y)\D x \D y \D \tilde x \D \tilde y \\
    & = \frac12 \int \int \int \einsfun \{z \leq \min(1,\tilde y / y )\}  w(y) \tilde w(y)\D x \D y \D \tilde y \\
    & = \frac12  \mathbb E \min(1,\tilde W^1_2 / W^1_2),
\end{align*}
where $\tilde W^1_2$ is an independent copy of $W^1_2$. For the third equality we used integration by substitution with $x = z \tilde x$ and $\D x = \tilde x \D z$, and $\int_{1/z}^\infty \tilde x^{-3} \D \tilde x = z^2/2$. We therefore obtain the final result that the asymmetric tail Kendall's $\tau$ coefficient $\tau_{XY}$ and, by exchanging the roles of $X$ and $Y$, also $\tau_{YX}$, can be expressed in terms of the extremal functions:
\begin{align*}
\tau_{XY} = 2\mathbb E \min(1,\tilde W^1_2 / W^1_2) - 1, \qquad 
\tau_{YX} = 2\mathbb E \min(1,\tilde W^2_1 / W^2_1) - 1,
\end{align*}
where $W^2_1$ is the extremal function of variable $X$ relative to $Y$. Importantly, there is duality between the extremal functions 
\[ \mathbb P( W^2_1 \leq x) = \mathbb E( \einsfun\{1/W^1_2\leq x\} W^1_2), \qquad x \geq 0.\]
\end{proof}

Representation~\eqref{tau_extreme} is extremely useful for several reasons. Firstly, the general formula for Kendall's $\tau$, which involves a bivariate distribution function and an expectation over a bivariate random vector, has been simplified to a simple univariate (the $\min$) function applied to two independent univariate random variables. Secondly, for many existing models the distribution of $W^1_2$ is known explicitely and therefore asymmetric tail Kendall's $\tau$ can be computed easily.

We show this at the example of the popular H\"usler--Reiss distribution \citep{HR1989, Engelke2015}, which can be seen as the Gaussian distributions in the world of extremes. In the bivariate case, this distribution class is characterized by a parameter $\Gamma \in (0,\infty)$, where for $\Gamma \to 0$ the dependence structure approaches complete dependence, and for $\Gamma \to \infty$ independence. For this symmetric distribution, the extremal functions are distributed identically according to log-normal distribution, such that $\log W^1_2, \log W^2_1 \sim \mathcal N(-\Gamma/2, \Gamma)$. We can therefore compute
\begin{align*}
\tau_{XY} = \tau_{YX} = 2\mathbb E \min(1,\exp\{\tilde N - N\}) - 1,
\end{align*}
where $N$ and $\tilde N$ are independent random variables with distribution $\mathcal N(-\Gamma/2, \Gamma)$. Because of independence, $\tilde N - N \sim \mathcal N(0,2\Gamma)$. Let $\phi$ and $\Phi$ denote the density and distribution function of a standard normal distribution, respectively. Then
\begin{align*}
    \mathbb E \min(1,\exp\{\tilde N - N\}) &= \frac12 + \int_{-\infty}^0 \exp\{ \sqrt{2\Gamma} y\} \phi(y) \D y \\
    &= \frac12 +  \frac12 \exp(\Gamma) \Phi(-\sqrt{2\Gamma})/ \Phi(0)\\
    & = \frac12 + \exp(\Gamma) \{1 - \Phi(\sqrt{2\Gamma})\}.
\end{align*}
Therefore we obtain 
\begin{align*}
    \tau_{XY} = \tau_{YX} = 2\exp(\Gamma) \{1 - \Phi(\sqrt{2\Gamma})\}.
\end{align*}
Note that, as expected, $\lim_{\lambda \to 0} \tau_{XY} = 1$ and $\lim_{\lambda \to \infty} \tau_{XY} = 0$, since $2\exp(\Gamma) \{1 - \Phi(\sqrt{2\Gamma})\} \sim O(\Phi(\sqrt{2\Gamma})/ \phi(\sqrt{2\Gamma})) \sim O(1/ \sqrt{2\Gamma})$ as $\Gamma\to \infty$.
We can compare this to the widely used extremal correlation for this distribution, which is given by 
\[ \chi = 2\{1 - \Phi(\sqrt{\Gamma} / 2)\}. \]

\section*{Open Research Section}

The data used for the hydrological application are available at https://nrfa.ceh.ac.uk/.

\section*{Acknowledgments}
This research has been also supported by the "Damocles" COST ACTION CA17109 "Understanding and modeling compound climate and weather events" through a Short Term Scientific Mission at the first author at University of Geneva.

\end{document}